\newtheorem{theorem}{Theorem}
\newtheorem{lemma}{Lemma}
\newtheorem{corollary}{Corollary}
\newcommand{\eg}[0]{e.g.}
\newcommand{\ie}[0]{i.e.}
\newcommand{\T}[0]{\ensuremath{\top}}
\newcommand{\BFchat}[0]{{\boldcommand{\hat{\mathsf{c}}}}}
\newcommand{\BFuhat}[0]{{\boldcommand{\hat{\mathsf{u}}}}}
\newcommand{\BFctilde}[0]{{\boldcommand{\tilde{\mathsf{c}}}}}
\newcommand{\BFGtilde}[0]{{\boldcommand{\tilde{\mathsf{G}}}}}
\DeclareMathOperator{\rk}{rank}
\newcommand{\eqdef}[0]{\ensuremath{\triangleq}}
\newcommand{\eqshould}[0]{\ensuremath{\overset{!}{=}}}
\newcommand{\pnR}[0]{\ensuremath{\overline{r}}}
\newcommand{\pR}[0]{\ensuremath{r}}
\newcommand{\Exp}[1]{\ensuremath{\IE\!\left[#1\right]}}
\newcommand{\Prob}[1]{\ensuremath{\operatorname{Pr}\!\left[#1\right]}}
\newcommand{\Var}[1]{\ensuremath{\operatorname{Var}\!\left[#1\right]}}
\newcommand{\Ivec}[0]{\ensuremath{\mathfrak{v}}}
\newcommand{\Imat}[0]{\ensuremath{\mathfrak{m}}}
\DeclarePairedDelimiter\floor{\lfloor}{\rfloor}
\DeclarePairedDelimiter\abs{\lvert}{\rvert}
\newcommand{\dd}{\,\mathrm{d}\hspace{0.07em}}
\begin{document}
\bstctlcite{IEEEexample:BSTcontrol}

\title{Babel Storage: Uncoordinated Content Delivery from Multiple Coded Storage Systems}

\author{\IEEEauthorblockN{Joachim Neu}
\IEEEauthorblockA{Stanford University\\
Stanford, CA, USA\\
Email: jneu@stanford.edu}
\and
\IEEEauthorblockN{Muriel M\'edard}
\IEEEauthorblockA{Massachusetts Institute of Technology\\
Cambridge, MA, USA\\
Email: medard@mit.edu}}

\maketitle

\begin{abstract}
In future content-centric networks, content is identified independently of its location.
From an end-user's perspective, individual storage systems dissolve into a seemingly omnipresent structureless `storage fog'.
Content should be delivered oblivious of the network topology, using multiple storage systems simultaneously, and at minimal coordination overhead.
Prior works have addressed the advantages of error correction coding for distributed storage and content delivery separately.
This work takes a comprehensive approach to highlighting the tradeoff between storage overhead and transmission overhead in uncoordinated content delivery from multiple coded storage systems.

Our contribution is twofold.
First, we characterize the tradeoff between storage and transmission overhead when all participating storage systems employ the same code.
Second, we show that
the resulting
stark inefficiencies can be avoided when storage systems use diverse codes.
What is more,
such code diversity
is not just technically desirable, but
presumably
will be
the reality in the increasingly heterogeneous networks of the future.
To this end,
we show that a mix of Reed-Solomon, low-density parity-check and random linear network codes achieves close-to-optimal performance at minimal coordination and operational overhead.
\end{abstract}

\begin{IEEEkeywords}
Error correction coding,
content-centric networks,
distributed storage,
content delivery.
\end{IEEEkeywords}

\IEEEpeerreviewmaketitle

\section{Introduction}
\label{sec:introduction}

In current communication networks, content is addressed by its (alleged) location, \eg, through its URL.
Content delivery is carried out only by the hosting storage system.
Different `hacks', \eg,
GeoDNS and IPv4 anycast \cite{book_intro_computer_networks},
are used to overcome the shortcomings of this approach.
In future \emph{content-centric networks}, content is identified independent of its location \cite{DBLP:conf/sigcomm/KoponenCCEKSS07}.
From an end-user's perspective, individual storage systems dissolve into a seemingly omnipresent, structureless and thus flexible `storage fog'.
This storage facility delivers the desired data without requiring the user to know about the data's location.
File delivery can be carried out by multiple storage systems simultaneously, and transparently to the user.

Storage systems use error correcting codes \cite{DBLP:books/daglib/0027592} to improve reliability in the presence of temporary and permanent disk failures.
Different code families have been investigated for application in coded storage systems, such as
Reed-Solomon codes (RS-Cs) \cite{DBLP:conf/adl/GoldbergY98,DBLP:conf/asplos/KubiatowiczBCCEGGRWWWZ00,DBLP:conf/hotstorage/RashmiSGKBR13},
random linear network codes (RLN-Cs) \cite{acedanski2005good},
low-density parity-check codes (LDPC-Cs) \cite{pt:03:ldpc,DBLP:journals/tcom/ParkLM18,DBLP:journals/corr/LubyPRMA17,DBLP:journals/corr/Luby16}
and locally repairable codes (LR-Cs) \cite{DBLP:journals/pvldb/SathiamoorthyAPDVCB13,DBLP:journals/tit/PapailiopoulosD14},
among many other codes \cite{DBLP:conf/usenix/HuangSXOCG0Y12,DBLP:journals/pieee/DimakisRWS11,DBLP:journals/ftcit/ShokrollahiL09,DBLP:journals/tos/LiL14,staircasecodes}.
As there is no clear `best' coding system, a diverse mix of codes is expected to be found among the different storage units of a storage fog.
A separate line of works has studied the benefits of introducing redundant information via error correcting codes for content delivery in wireless networks \cite{DBLP:journals/icl/LivaPC10}, such as for caching
\cite{DBLP:journals/tit/Maddah-AliN19,DBLP:conf/isit/ReisizadehMM18},
using device-to-device communications \cite{DBLP:journals/tcom/PiemonteseA19},
or in gossip protocols \cite{DBLP:journals/tit/DebMC06,DBLP:journals/jacm/Haeupler16}.

How can multiple storage systems, employing a mix of different error correcting codes, jointly deliver content to a user with high throughput, low coordination overhead, and low complexity?
\emph{In particular, how can we combine the two so far isolated approaches of coding for storage and coding for content delivery?}
This is the question guiding our work in this paper.
We present our results as follows.
In Section~\ref{sec:system-model}, we introduce the
system model.
We then look at two extreme cases regarding the used codes.
First, in Section~\ref{sec:same-codes}, we use exactly the same code on all participating storage systems.
We show that storage redundancy, while originally introduced to improve reliability, also helps in efficient file delivery.
We characterize the tradeoff between storage and transmission overhead.
Second, in Section~\ref{sec:different-codes}, we examine the opposite extreme of using different codes.
In particular, we combine RS-Cs, RLN-Cs and LDPC-Cs, and demonstrate that the combination behaves similar to a low-field-size random code.
We summarize the implications of our findings as relevant to system implementers and network operators in Section~\ref{sec:conclusion}.

\section{System Model}
\label{sec:system-model}

\begin{figure*}[!t]
    \centering
    \begin{tikzpicture}
        \node[inner sep=0pt] (u_icon) at (0,0) {\includegraphics[width=1cm]{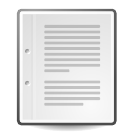}};
            \node (u) at ($(u_icon.east)+(+0.4,-0.0)$) {$\BFu$};

        \node[inner sep=0pt] (s1) at (5,+1) {\includegraphics[width=1cm]{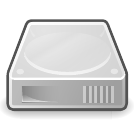}};
            \node (s1_c) at ($(s1.west)+(-0.5,+0.0)$) {$\BFc_1$};
            \node (s1_chat) at ($(s1.east)+(+0.5,+0.0)$) {$\BFchat_1$};
        \node[inner sep=0pt] (sS) at (5,-1) {\includegraphics[width=1cm]{tango-harddisk}};
            \node (sS_c) at ($(sS.west)+(-0.5,+0.0)$) {$\BFc_S$};
            \node (sS_chat) at ($(sS.east)+(+0.5,+0.0)$) {$\BFchat_S$};
        \node[inner sep=0pt] (sdots) at (5,0) {$\vdots$};

        \draw[->] (u.east) -- (s1_c.west) node[midway,fill=white] {$\BFc_1 = \BFu \BFG_1$};
        \draw[->] (u.east) -- (sS_c.west) node[midway,fill=white] {$\BFc_S = \BFu \BFG_S$};

        \node (s1_ctilde) at ($(s1.east)+(+3.5,+0.0)$) {$\BFctilde_1$};
        \node (sS_ctilde) at ($(sS.east)+(+3.5,+0.0)$) {$\BFctilde_S$};
        \draw[->,dashed] (s1_chat) -- (s1_ctilde);
        \draw[->,dashed] (sS_chat) -- (sS_ctilde);
        \draw [decorate,decoration={brace,amplitude=5pt,raise=0pt},yshift=0pt] (s1_ctilde.north east) -- (sS_ctilde.south east);
        \node (collect_ctilde) at ($(sdots)+(+6.6,+0)$) {$\underbrace{\left[ \BFctilde_1 \dots \BFctilde_S \right]}_{\BFctilde} = \BFu \underbrace{\left[ \BFGtilde_1 \dots \BFGtilde_S \right]}_{\BFGtilde}$};

        \node (uhat_icon) at (17,0) {\includegraphics[width=1cm]{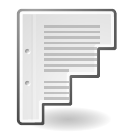}};
        \node (uhat) at ($(uhat_icon.west)+(-0.4,-0.0)$) {$\BFuhat$};
        \draw[->] (collect_ctilde.east) -- (uhat.west);

        \node at ($(u_icon)+(+0.0,+2.0)$) {\sc File};
        \node at ($(sdots)+(-2.7,+2.0)$) {\sc Encoding};
        \node at ($(sdots)+(+0.0,+2.0)$) {\sc Storage};
        \node at ($(sdots)+(+2.5,+2.0)$) {\sc Transmission};
        \node at ($(uhat)+(-1.3,+2.0)$) {\sc Decoding};
        \node at ($(uhat_icon)+(+0.0,+2.0)$) {\sc File};
    \end{tikzpicture}
    \caption{
    A file is stored and distributed using $S$ different storage systems in the following way.
    For each storage system $i \in \left\{1,...,S\right\}$,
    the file $\BFu$ is encoded into $\BFc_i$ which is stored.
    Some symbols of $\BFc_i$ get erased over time;
    the remaining ones are collected in the vector $\BFchat_i$.
    Of these, some are chosen for transmission;
    the symbols that arrive at the requesting user are denoted as $\BFctilde_i$.
    As long as $\rk\left[ \BFGtilde_1 \dots \BFGtilde_S \right] = k$,
    the file $\BFuhat = \BFu$ can be decoded.
    }
    \label{fig:system-model}
\end{figure*}

We denote the finite field of characteristic $p$ as $\IF_p$ and its extension field of degree $m$ as $\IF_{p^m}$.
We sometimes write $\IF_q$ for $\IF_{p^m}$ implying that $q = p^m$.
Note that $\IF_{q^m}$ is isomorphic to the $m$-dimensional $\IF_q$-vector space $\IF_q^m$.
We use lower-case Greek letters for field elements, lower-case Latin bold letters for vectors, and upper-case Latin bold letters for matrices.
As customary in coding theory, vectors denote row vectors unless stated otherwise.
Transposition is $(.)^\T$.

Fix a block length $n$, dimension $k \leq n$, and field $\IF_q$, then a $[n, k]_q$ linear block code $\CC$ with rate $R\eqdef\frac{k}{n}$ is a $k$-dimensional subspace of $\IF_q^n$. The code can equivalently be represented using a generator matrix $\BFG \in \IF_q^{k \times n}$ or a parity-check matrix $\BFH \in \IF_q^{(n-k) \times n}$,
\begin{IEEEeqnarray}{rCl}
    \CC = \left\{ \BFc \;\middle\vert\; \exists \, \BFu \in \IF_q^k : \BFc = \BFu \BFG \right\} = \left\{ \BFc \;\middle\vert\; \BFH \BFc^\T = \BFzero^\T  \right\} \subseteq \IF_q^n.   \IEEEeqnarraynumspace
\end{IEEEeqnarray}
Note that neither $\BFG$ nor $\BFH$ are unique
and that $\BFG$ fixes the mapping between information symbols $\BFu$ and codeword symbols $\BFc$ as
\begin{IEEEeqnarray}{rCl}
    \BFc = \BFu \BFG.
\end{IEEEeqnarray}
During storage and transmission, some number $n_{\text{E}}$ of symbols of the codeword $\BFc$ might get erased, so that only a subset $\BFctilde \in \IF_q^{n-n_{\text{E}}}$ is received.
The $\BFctilde$ relate to $\BFu$ as $\BFctilde = \BFu \BFGtilde$ via a reduced $\BFGtilde \in \IF_q^{k \times (n-n_{\text{E}})}$ where those columns of $\BFG$ are removed that correspond to the erased symbols in $\BFc$. As long as $\BFGtilde$ is full rank (i.e., $\rk\BFGtilde = k$), $\BFctilde = \BFu \BFGtilde$ has a unique solution and $\BFu$ can be recovered from $\BFctilde$, \eg, using Gaussian elimination in runtime complexity $\CO(k^3)$.

The storage and distribution of a file using $S$ different storage systems is depicted in Fig.~\ref{fig:system-model} and proceeds as follows.
For each storage system $i \in \left\{1,...,S\right\}$, the file $\BFu \in \IF_q^k$ is encoded using a $[n_i,k]_q$ code into $\BFc_i = \BFu \BFG_i \in \IF_q^{n_i}$.
These codewords $\BFc_i$ are stored.
Some symbols get erased over time, the remaining ones are denoted $\BFchat_i \in \IF_q^{\hat{n}_i}$.
Of these, some are chosen for transmission, so that $\BFctilde_i \in \IF_q^{\tilde{n}_i}$ arrive at the requesting user.
Let $\BFctilde \eqdef \left[ \BFctilde_1 \dots \BFctilde_S \right]$ and $\BFGtilde \eqdef \left[ \BFGtilde_1 \dots \BFGtilde_S \right]$.
As long as $\rk\BFGtilde = k$, the file $\BFuhat = \BFu$ can be decoded.

The goal is to transmit as few symbols as possible, but enough so that the user can decode as soon as possible with very high probability.
A codeword symbol is useful if and only if its corresponding column increases the rank of the matrix $\BFGtilde$.
There should be no coordination, neither among the storage systems nor by the user, regarding the selection of symbols to be transmitted.
Instead, using the metaphor of a `digital fountain' \cite{DBLP:conf/sigcomm/ByersLMR98}, upon request each storage system sends a stream of symbols until the user asks it to stop.
The system performance is dominated by a \emph{coupon collector's problem} -- how many transmissions (some of which might not add new information) are necessary to collect the required $k$ independent pieces?
In Sections~\ref{sec:same-codes} and \ref{sec:different-codes} we show how storage redundancy and code diversity can mitigate
the loss in efficiency due to inadvertently transmitting duplicate pieces.

\section{Minimum Code Diversity}
\label{sec:same-codes}

In this section, we consider an extreme case of the system depicted in Fig.~\ref{fig:system-model}.
All $S$ storage systems use the same $[n, k]_q$ maximum distance separable (MDS) code, \ie, the information can be decoded from any $k$ codeword symbols (and equivalently, any $k$ columns of the code's $\BFG$ are linearly independent).
The storage systems take turns transmitting a randomly selected codeword symbol to the user, \eg, for $S=2$, the $\nth{1}$, $\nth{3}$, ..., and $\nth{2}$, $\nth{4}$, ..., symbols arrive from the first and second system, respectively.
While each storage system ensures uniqueness among the symbols it sends, the user might receive the same symbol (and thus redundant information) multiple times, from different systems.

\begin{figure}[!t]
    \centering
    \begin{tikzpicture}[->,>=stealth',shorten >=1pt,auto,node distance=1.8cm,semithick]
        \tikzstyle{every state}=[minimum size=1.1cm]

        \node[state] (A)   {$n$};
        \node[state] (C) [right of=A,xshift=+0.9cm]   {$i$};
        \node[state] (D) [right of=C]   {$i-1$};
        \node[state] (F) [right of=D,xshift=+0.9cm]   {$0$};

        \path (A) edge [loop below] node {$\pnR_{n,\ell}$} (A);
        \path (A) edge [draw=none] node [xshift=-0.35cm] {$\pR_{n,\ell}$} (C);
        \path (A) edge [] node [midway,xshift=-0.1cm,yshift=-0.2cm,fill=white] {$\dots$} (C);

        \path (C) edge [loop below] node {$\pnR_{i,\ell}$} (C);
        \path (C) edge [] node {$\pR_{i,\ell}$} (D);
        \path (D) edge [loop below] node {$\pnR_{i-1,\ell}$} (D);

        \path (D) edge [draw=none] node [xshift=-0.25cm] {$\pR_{i-1,\ell}$} (F);
        \path (D) edge [] node [midway,xshift=-0.1cm,yshift=-0.2cm,fill=white] {$\dots$} (F);
        \path (F) edge [loop below] node {$1$} (F);
    \end{tikzpicture}
    \caption{Time-varying Markov chain modeling the evolution of the number of symbols $U_\ell$ that were not seen among the first $\ell$ receive symbols.}
    \label{fig:markov-chain}
\end{figure}
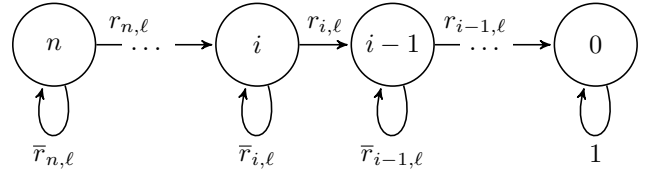

In the following,
we analyze
the underlying coupon collector's problem,
both for large systems (block length $n\to\infty$) as well as for systems with finite $n$.
We model the number of symbols that were not seen among the first $\ell$ received symbols as the random variables $U_\ell$
(`\underline{u}nknown').
The $\left\{ U_\ell \right\}_{\ell=0}^{Sn}$ are described by the time-varying Markov chain depicted in Fig.~\ref{fig:markov-chain}, with states $\CU = \left\{ 0, ..., n \right\}$, initial state $U_0$ distributed as
\begin{IEEEeqnarray}{rCl}
    U_0 &=& \left\{
        \begin{array}{ll}
        n & \text{with probability (w.p.) } 1
        \end{array}\right.
\end{IEEEeqnarray}
and transitions
\begin{IEEEeqnarray}{rCl}
    U_{\ell+1} \mid U_{\ell} &=& \left\{
        \begin{array}{ll}
            U_{\ell} - 1
                & \text{ w.p. } \pR_{U_\ell,\ell}   \\
            U_{\ell}
                & \text{ w.p. } \pnR_{U_\ell,\ell}
        \end{array}\right.
\end{IEEEeqnarray}
where
\begin{IEEEeqnarray}{rCl}
    \pR_{i,\ell} \eqdef \frac{i}{n-\floor*{\frac{\ell}{S}}}
    \qquad{}
    \text{and}
    \qquad{}
    \pnR_{i,\ell} \eqdef 1-\pR_{i,\ell}
\end{IEEEeqnarray}
model coupon collecting and
denote the probability of receiving a new or duplicate symbol in the $\ell$-th transmission, respectively, if at that time $i$ symbols are unknown to the receiver.
In the $\ell$-th round,
the active storage system chooses uniformly at random one of the
$n-\floor*{\frac{\ell}{S}}$
pieces it has not transmitted in previous rounds.
Out of these pieces,
$i$ pieces have not been seen previously by the receiver,
hence the probability $\pR_{i,\ell}$ for receiving novel information.
Using this model, the probability mass functions (PMFs) $P_{U_\ell}(u_\ell)$ can be obtained for all $\ell$
for any fixed $n$ and $S$.

\begin{lemma}
\label{thm:recursive-expected-value}
The following recursion holds:
\begin{IEEEeqnarray}{rCl}
    \Exp{U_{\ell+1}} = \underbrace{\left( 1 - \frac{1}{n-\floor*{\frac{\ell}{S}}} \right)}_{\eqdef \, a_\ell} \Exp{U_\ell}
\end{IEEEeqnarray}
\end{lemma}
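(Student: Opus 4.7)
The plan is to apply the law of total expectation, conditioning on the value of $U_\ell$, and exploit the fact that the one-step transition probability $\pR_{i,\ell}$ is linear in the state $i$, which is precisely what makes a clean multiplicative recursion for the unconditional expectation possible.

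First I would write $\Exp{U_{\ell+1}} = \Exp{\Exp{U_{\ell+1} \mid U_\ell}}$. From the Markov chain in Fig.~\ref{fig:markov-chain}, given $U_\ell = u$ the next state is $u-1$ with probability $\pR_{u,\ell} = u/(n - \lfloor \ell/S \rfloor)$ and $u$ otherwise, so the inner conditional expectation evaluates to $u - \pR_{u,\ell} = u\bigl(1 - 1/(n - \lfloor \ell/S \rfloor)\bigr)$. Crucially, the factor $a_\ell \eqdef 1 - 1/(n - \lfloor \ell/S \rfloor)$ does not depend on $u$, because the $u$ from $\pR_{u,\ell}$ cancels one factor of $u$ in the subtraction.

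Then I would take the outer expectation over $U_\ell$ and pull the constant $a_\ell$ out by linearity, obtaining $\Exp{U_{\ell+1}} = a_\ell \Exp{U_\ell}$, which is the claimed recursion. A careful write-up only needs to handle the boundary state $u=0$, where $\pR_{0,\ell}=0$ and the chain is absorbed, but the same formula $\Exp{U_{\ell+1} \mid U_\ell = 0} = 0 = a_\ell \cdot 0$ still holds, so no separate case analysis is needed.

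There is essentially no obstacle here beyond bookkeeping; the content of the lemma is the observation that the coupon-collector transition probability $\pR_{i,\ell}$ is linear in $i$, so the (in general much harder) evolution of the full PMF $P_{U_\ell}$ collapses to a one-line linear recursion for the mean. The only subtlety worth flagging is the implicit assumption that $\ell < Sn$, so that the denominator $n - \lfloor \ell/S \rfloor$ is strictly positive and $\pR_{i,\ell}$ is well defined throughout the relevant time horizon.
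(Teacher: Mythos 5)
Your proof is correct and follows the same route as the paper: apply the tower property, compute $\Exp{U_{\ell+1}\mid U_\ell}=(U_\ell-1)\pR_{U_\ell,\ell}+U_\ell\,\pnR_{U_\ell,\ell}=a_\ell U_\ell$, and pull out the state-independent factor $a_\ell$ by linearity of the outer expectation. Your added remarks on the absorbing state $u=0$ and the requirement $\ell<Sn$ are sound but not needed beyond what the paper's one-line derivation already implicitly handles.
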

\begin{proof}
\begin{IEEEeqnarray}{rCl}
    \IEEEeqnarraymulticol{3}{l}{
        \Exp{ U_{\ell+1} } = \Exp{ \Exp{ U_{\ell+1} | U_{\ell} } }
    }\\
    \quad&=&
        \Exp{ (U_{\ell}-1) \, \pR_{U_\ell,\ell} + U_{\ell} \, \pnR_{U_\ell,\ell}  }
        =
        a_\ell \, \Exp{ U_{\ell} }
\end{IEEEeqnarray}
\end{proof}

\begin{corollary}
    \begin{IEEEeqnarray}{rCl}
        \Exp{U_\ell} = n \, \prod_{i=0}^{\ell-1} \left( 1 - \frac{1}{n-\floor*{\frac{i}{S}}} \right) = n \, \prod_{i=0}^{\ell-1} a_i
    \end{IEEEeqnarray}
\end{corollary}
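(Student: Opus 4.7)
The plan is to unroll the recursion from Lemma~\ref{thm:recursive-expected-value} by a straightforward induction on $\ell$. Since the corollary is an immediate consequence of the lemma together with the initial condition of the Markov chain, I do not expect any real obstacle; the task is just to verify that the indices in the product match up.

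First, I would establish the base case $\ell = 0$: by the definition of the initial state, $U_0 = n$ deterministically, so $\Exp{U_0} = n$, and the empty product $\prod_{i=0}^{-1} a_i = 1$ gives $n \cdot 1 = n$ as claimed.

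For the inductive step, assume $\Exp{U_\ell} = n \prod_{i=0}^{\ell-1} a_i$ holds for some $\ell \geq 0$. Applying Lemma~\ref{thm:recursive-expected-value} yields
\begin{IEEEeqnarray}{rCl}
\Exp{U_{\ell+1}} = a_\ell \, \Exp{U_\ell} = a_\ell \cdot n \prod_{i=0}^{\ell-1} a_i = n \prod_{i=0}^{\ell} a_i,
\end{IEEEeqnarray}
which matches the claimed form with $\ell$ replaced by $\ell+1$. Substituting the explicit expression for $a_i = 1 - \frac{1}{n - \floor*{i/S}}$ then gives the first equality in the corollary. No further work is needed.
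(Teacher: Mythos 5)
Your proof is correct, and it matches the intended argument: the paper states the corollary without proof precisely because it is the immediate unrolling of the recursion in Lemma~\ref{thm:recursive-expected-value} starting from $\Exp{U_0}=n$. Your induction (base case $\Exp{U_0}=n$ with the empty product, inductive step multiplying by $a_\ell$) is exactly that unrolling, with the indices handled correctly.
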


For ease of exposition, we introduce random variables for the number of received unique symbols, $K_\ell \eqdef n - U_\ell$, and rescaled variants with normalized time and amount of data,
\begin{IEEEeqnarray}{rCl}
    U_\tau^{(n)} \eqdef \frac{1}{n} U_{n \tau},
    \qquad{}
    K_\tau^{(n)} \eqdef \frac{1}{n} K_{n \tau},
    \qquad{}
    \tau \in [0, S].   \IEEEeqnarraynumspace
\end{IEEEeqnarray}

The following theorem provides an analytic expression for the system behavior in the large block length regime ($n\to\infty$):
\begin{theorem}
\label{thm:asymptotic-behavior}
For $n\to\infty$, the fraction of unseen symbols at any given point in time $\tau$ concentrates to its mean and follows
\begin{IEEEeqnarray}{rCl}
    u_S(\tau) \eqdef \lim_{n\to\infty} \Exp{U_\tau^{(n)}} = \left( 1 - \frac{\tau}{S} \right)^S.
\end{IEEEeqnarray}
\end{theorem}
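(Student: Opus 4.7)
The plan is to split the statement into two parts: convergence of the expectation to the deterministic limit $u_S(\tau)=(1-\tau/S)^S$, and concentration of $U_\tau^{(n)}$ around its mean.

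For the expectation, I would start from the closed form given by the Corollary,
\begin{IEEEeqnarray}{rCl}
    \Exp{U_\tau^{(n)}} = \prod_{i=0}^{\lfloor n\tau \rfloor - 1} \left(1 - \frac{1}{n - \lfloor i/S \rfloor}\right),
\end{IEEEeqnarray}
and pass to logarithms. Using $\log(1-x) = -x + O(x^2)$ for $x \to 0$, the logarithm becomes $-\sum_{i=0}^{\lfloor n\tau\rfloor-1} \frac{1}{n-\lfloor i/S\rfloor} + O(1/n)$, where the $O(1/n)$ absorbs the quadratic correction since each term is $O(1/n)$ and there are $O(n)$ of them. The remaining sum is a Riemann-like sum for the integral $\int_0^\tau \frac{\mathrm{d}x}{1-x/S}$, since $\tfrac{1}{n}\cdot\tfrac{1}{1-\lfloor i/S\rfloor/n}$ evaluated at $x = i/n$ gives the Riemann approximation; the floor introduces only $O(1/n)$ error uniformly. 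Computing the integral yields $-S\log(1-\tau/S)$, so exponentiating gives $u_S(\tau)=(1-\tau/S)^S$. This is essentially bookkeeping and should not pose a real difficulty.

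For the concentration, I would use a bounded-differences argument on the Doob martingale associated with the sequence of $\ell = S n$ coupon draws. Let $M_\ell \eqdef \Exp{U_{n\tau} \mid \mathcal{F}_\ell}$ where $\mathcal{F}_\ell$ is the $\sigma$-algebra generated by the first $\ell$ draws. Because resampling a single draw changes the remaining trajectory of the Markov chain in Fig.~\ref{fig:markov-chain} by at most a bounded amount in expectation (each step changes $U$ by $0$ or $1$, and the coupling of two trajectories differing in one step can be shown to decouple with bounded total variation), the increments $|M_{\ell+1}-M_\ell|$ are bounded by an absolute constant. Azuma–Hoeffding then gives
\begin{IEEEeqnarray}{rCl}
    \Prob{ \abs*{U_{n\tau} - \Exp{U_{n\tau}}} \geq n\epsilon } \leq 2\exp\!\left(-c\,n\epsilon^2\right)
\end{IEEEeqnarray}
for some constant $c>0$, so $U_\tau^{(n)} - \Exp{U_\tau^{(n)}} \to 0$ in probability (and almost surely along the sequence $n$, by Borel–Cantelli). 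Combined with the first part, this gives the claimed convergence of $U_\tau^{(n)}$ to $u_S(\tau)$.

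The main obstacle I expect is justifying the bounded-differences bound cleanly: although intuitively one altered draw only perturbs a single step of the chain, the Markov structure means the effect propagates. A slick workaround is to bypass the martingale entirely and bound the variance directly via the recursion $\Var{U_{\ell+1}} = a_\ell^2 \Var{U_\ell} + \Exp{U_\ell}\,\pR_{U_\ell,\ell}(1-\pR_{U_\ell,\ell})$ (adapting Lemma~\ref{thm:recursive-expected-value}), showing $\Var{U_\ell} = O(n)$, and then applying Chebyshev's inequality to $U_\tau^{(n)}$ to get $O(1/n)$ variance, which yields concentration to the mean as required.
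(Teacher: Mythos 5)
Your proposal lands on essentially the same proof as the paper: the expectation limit via logarithms, the $\log(1-x)\approx -x$ expansion, and the Riemann-sum-to-integral passage is exactly the paper's argument, and the concentration step you propose as the ``slick workaround'' (bound $\Var{U_\ell}$ by an $O(n)$ recursion derived from Lemma~\ref{thm:recursive-expected-value}, then apply Chebyshev to the rescaled variable) is precisely what the paper does. The Azuma--Hoeffding detour is superfluous here and, as you anticipate, a bit delicate to justify on the Markov-chain filtration --- the cleaner way to get a bounded-differences bound would be to view $U_{n\tau}$ as a function of $S$ independent random permutations and apply McDiarmid for permutations (a single transposition changes the count by at most $2$), which would yield exponential rather than polynomial concentration; but the paper is content with the Chebyshev route and so is your final proposal.
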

\begin{proof}
First, we show the limit of the expectation.
\begin{IEEEeqnarray}{rCl}
    \IEEEeqnarraymulticol{3}{l}{
        \lim_{n\to\infty} \log\left( \Exp{U_\tau^{(n)}} \right)
        =
        \lim_{n\to\infty} \sum_{\ell=0}^{n\tau-1} \log\left( 1 - \frac{1}{n-\floor*{\frac{\ell}{S}}} \right)
    }\IEEEeqnarraynumspace\\
    &\overset{\text{(a)}}{\approx}&
        \lim_{n\to\infty} \sum_{\ell=0}^{n\tau-1} \log\left( 1 - \frac{1}{n-\frac{\ell}{S}} \right)
    \overset{\text{(b)}}{\approx}
        \lim_{n\to\infty} - \sum_{\ell=0}^{n\tau-1} \frac{1}{n-\frac{\ell}{S}}
        \\
    &=&
        \lim_{n\to\infty} - \sum_{\ell=0}^{n\tau-1} \frac{1}{n} \frac{1}{1-\frac{\ell}{Sn}}
    \overset{\text{(c)}}{\approx}
        \int_0^\tau \frac{1}{\frac{x}{S}-1} \dd x
        \\
    &=&
        S \log\left( 1 - \frac{\tau}{S} \right)
\end{IEEEeqnarray}
Here,
(a) uses $\floor*{\frac{\ell}{S}} \approx \frac{\ell}{S}$ for large $\ell$,
(b) uses $\log(1-x) \approx -x$ for $x \approx 0$,
and (c) uses the convergence of the left Riemann sum to the Riemann integral,
\begin{IEEEeqnarray}{rCl}
    \lim_{n\to\infty}
        \sum_{i=0}^{\alpha n - 1} \frac{1}{n} \, f\!\left( \frac{i}{n} \right)
    =
    \int_{0}^{\alpha} f(x) \dd x,
\end{IEEEeqnarray}
for $\alpha \eqdef \tau$,
$f(x) \eqdef \frac{1}{\frac{x}{S} - 1}$,
and $S > \tau > 0$.

Second, we show the concentration of $U_\tau^{(n)}$ to its expectation as $n\to\infty$.
Note that
\begin{IEEEeqnarray}{C}
    \label{eq:bounds-on-means}
    \forall S, n:
    \quad{}
    \Exp{U_\ell} \leq n \left( 1 - \frac{\ell}{Sn} \right)^S \leq n \exp\left( -\frac{\ell}{n} \right),
    \IEEEeqnarraynumspace\\
    \label{eq:bounds-on-a}
    \forall \ell \;\; \forall \ell' \leq \ell:
    \quad{}
    0 \leq a_\ell \leq a_{\ell'} \leq 1,
    \qquad{}
    2 a_\ell - 1 \leq a_\ell^2.
    \IEEEeqnarraynumspace
\end{IEEEeqnarray}
Hence,
\begin{IEEEeqnarray}{rCl}
    \Var{U_{\ell+1}}
        &\overset{\text{(a)}}{=}& (2 a_\ell - 1) \Exp{U_\ell^2}   \nonumber\\
        &&\quad {}-{} a_\ell^2 \Exp{U_\ell}^2 + (1 - a_\ell) \Exp{U_\ell}   \IEEEeqnarraynumspace\\
        \label{eq:variance-bound}
        &\overset{\text{(b)}}{\leq}& (1 - a_\ell) \Exp{U_\ell} + a_\ell^2 \Var{U_\ell}   \IEEEeqnarraynumspace\\
        &\overset{\text{(c)}}{\leq}& \sum_{i=0}^\ell (1 - a_{\ell-i}) \Exp{U_{\ell-i}} \prod_{j=0}^{i-1} a_{\ell-j}^2   \IEEEeqnarraynumspace\\
        &\overset{\text{(d)}}{\leq}& n \sum_{i=0}^\ell (1 - a_{\ell-i}) \exp\left( - \frac{\ell-i}{n} \right)   \IEEEeqnarraynumspace\\
        &\overset{\text{(e)}}{\leq}& \frac{n}{n - \floor*{\frac{\ell}{S}}} \sum_{i=0}^\ell \exp\left( - \frac{i}{n} \right) = \CO(n).   \IEEEeqnarraynumspace
\end{IEEEeqnarray}
Here,
(a) uses Lemma~\ref{thm:recursive-expected-value} and $\Var{X} = \Exp{X^2} - \Exp{X}^2$,
(b) uses \eqref{eq:bounds-on-a},
(c) results from repeated application of inequality \eqref{eq:variance-bound},
(d) uses \eqref{eq:bounds-on-means} and \eqref{eq:bounds-on-a},
and (e) uses \eqref{eq:bounds-on-a}.
As a result,
using Chebyshev's inequality, for any fixed $\varepsilon > 0$,
\begin{IEEEeqnarray}{C}
    \lim_{n\to\infty} \Prob{ \abs*{ \, U_\tau^{(n)} - \Exp{U_\tau^{(n)}} } \geq \varepsilon }
    \leq
    \lim_{n\to\infty}
    \frac{\Var{U_{n\tau}}}{\varepsilon^2 n^2}
    =
    0.   \IEEEeqnarraynumspace
\end{IEEEeqnarray}
\end{proof}

\makeatletter \newcommand{\pgfplotsdrawaxis}{\pgfplots@draw@axis} \makeatother

\pgfplotsset{axis line on top/.style={
  axis line style=transparent,
  ticklabel style=transparent,
  tick style=transparent,
  axis on top=false,
  after end axis/.append code={
    \pgfplotsset{axis line style=opaque,
      ticklabel style=opaque,
      tick style=opaque,
      grid=none}
    \pgfplotsdrawaxis}
  }
}

\begin{figure}[!t]
    \centering
    \begin{tikzpicture}
        \begin{axis}[
            enlargelimits=false,
            colormap/cool,
            colorbar horizontal,
            colorbar style={
                xlabel={Probability},
                axis line on top,
            },
            xlabel={Transmitted Symbols $\ell$},
            ylabel={Received Unique Symbols $K_{\ell}$},
            legend style={font=\footnotesize},
            legend pos=south east,
            legend cell align={left},
            axis line on top,
            ]
            \input{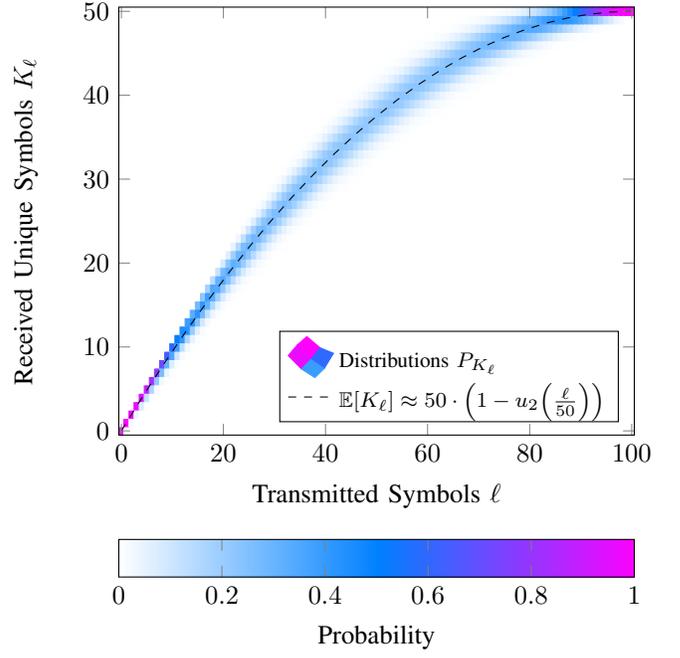}
            \addlegendentry{Distributions $P_{K_{\ell}}$}
            \addplot [dashed,domain=0:100] {50 * (1 - (1 - x/50/2)^2)};
            \addlegendentry{$\Exp{K_\ell} \approx 50 \cdot \left(1 - u_2\!\left(\frac{\ell}{50}\right)\right)$}
        \end{axis}
    \end{tikzpicture}
    \caption{
        PMFs $P_{K_{\ell}}(k_{\ell})$
        of the number of received unique symbols $K_\ell$
        after transmitting $\ell$ symbols,
        for $S=2$ storage systems and block length $n=50$,
        and approximation $\Exp{K_\ell} \approx n \cdot \left(1 - u_S\!\left(\frac{\ell}{n}\right)\right)$
        according to Theorem~\ref{thm:asymptotic-behavior}.
    }
    \label{fig:example-discrete-distribution}
\end{figure}

Ergo,
as $n\to\infty$,
the fraction of unique received symbols
is fully characterized by
$1 - u_S(\tau) = 1 - \left( 1 - \frac{\tau}{S} \right)^S$.
Fig.~\ref{fig:example-discrete-distribution}
visualizes the
PMFs $P_{K_{\ell}}(k_{\ell})$
of the number of received unique symbols $K_\ell$
after transmitting $\ell$ symbols,
for the example of
$S=2$ and $n=50$.
The approximation
based on Theorem~\ref{thm:asymptotic-behavior},
$\Exp{K_\ell} \approx n \cdot \left(1 - u_S\!\left(\frac{\ell}{n}\right)\right)$,
fits well.

To analyze the system for finite block lengths, we turn to the random variable $L$ denoting the number of transmissions required to complete the download (\ie, to receive $k$ unique symbols).
Exactly $\ell$ transmissions are required if the user receives $k-1$ unique symbols in the first $\ell-1$ transmissions and a new symbol in the $\ell$-th transmission,
thus,
\begin{IEEEeqnarray}{rCl}
    P_L(\ell) = P_{U_{\ell-1}}( n-k+1 ) \cdot \pR_{n-k+1,\ell-1}.
\end{IEEEeqnarray}

The average number of transmissions required to complete the download is then obtained as the expected value $\tilde{L} \eqdef \Exp{L}$. We normalize this to $\tilde{\tau} \eqdef \frac{\tilde{L}}{n}$. In a similar fashion for $n\to\infty$, $\tilde{\tau}$ is the fraction of time (\ie, number of transmissions) it takes to complete the download, which can be obtained from
\begin{IEEEeqnarray}{rCl}
    R \eqshould 1 - \left( 1 - \frac{\tilde{\tau}}{S} \right)^S
    ,
    \qquad{}
    \tilde{\tau} = S \ \left( 1 - \sqrt[S]{ 1 - R } \right)
    .
    \IEEEeqnarraynumspace
\end{IEEEeqnarray}

Assume the storage systems use an MDS code of rate $R = \frac{k}{n}$.
We call $\sigma \eqdef \frac{1}{R} = \frac{n}{k}$ the \emph{storage factor} quantifying the storage overhead of the systems due to coding.
Similarly, we call $\delta \eqdef \frac{\tilde{\tau}}{R}$ the \emph{transmission factor},
which quantifies the overhead in transmissions required for uncoordinated vs. coordinated (\ie, where duplicate symbols are avoided) download.

\begin{figure}[!t]
    \centering
    \begin{tikzpicture}
        \begin{axis}[
            xlabel={Storage Factor $\sigma$},
            ylabel={Transmission Factor $\delta$},
            xmin=0.95,
            xmax=2.05,
            ymin=0.95,
            ymax=2.05,
            legend style={font=\footnotesize},
            legend cell align={left},
            restrict x to domain=0:3,
            ]
            \input{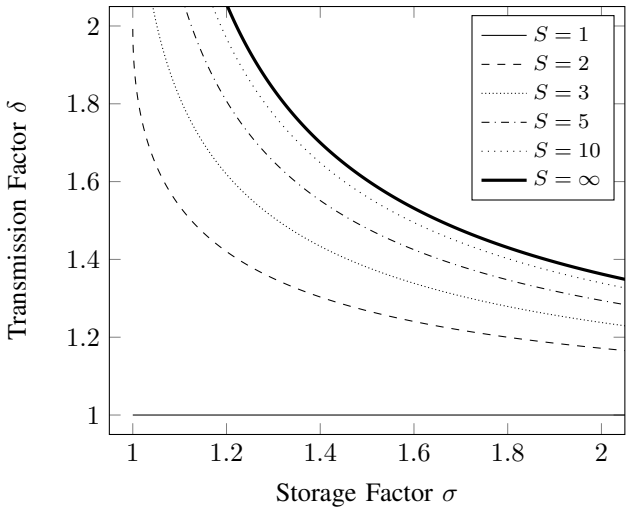}
        \end{axis}
    \end{tikzpicture}
    \caption{Tradeoff between storage factor $\sigma$ and transmission factor $\delta$ for different numbers of storage systems $S$ at large block length $n\to\infty$.}
    \label{fig:storage-transmission-factors}
\end{figure}

\begin{figure}[!t]
    \centering
    \begin{tikzpicture}
        \begin{axis}[
            xlabel={Storage Factor $\sigma$},
            ylabel={Transmission Factor $\delta$},
            xmin=0.95,
            xmax=1.55,
            ymin=1.2,
            ymax=1.55,
            legend style={font=\footnotesize},
            legend cell align={left},
            restrict x to domain=0:3,
            ]
            \input{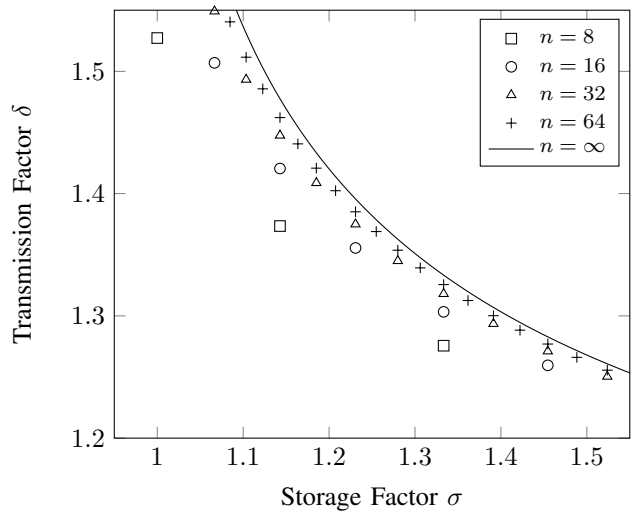}
        \end{axis}
    \end{tikzpicture}
    \caption{
        Tradeoff between storage factor $\sigma$ and transmission factor $\delta$ for number of storage systems $S=2$ and different block lengths $n$.
        Already for reasonably low $n \approx 64$ the system behavior is well predicted using the analytic expression for the asymptotic regime $n\to\infty$.
    }
    \label{fig:storage-transmission-factors-finite}
\end{figure}

Fig.~\ref{fig:storage-transmission-factors} shows the tradeoff between $\sigma$ and $\delta$ for large $n$ and different number of storage systems $S$ all employing the same MDS code.
Clearly, uncoordinated file delivery incurs a loss in transmission efficiency.
The loss gets worse the more independent servers participate in the delivery.
However, even for an arbitrarily large number of sources, the loss
remains bounded
as long as
$\sigma \gneq 1$.
The storage redundancy can be used to mitigate the transmission overhead.

Fig.~\ref{fig:storage-transmission-factors-finite} shows the behavior for some small block lengths $n$ and two servers $S=2$.
In particular for $\sigma$ close to $1$, the system completes downloads slightly faster than for $n\to\infty$.
Note, however, that even for moderate code lengths $n \approx 64$ the asymptotic expression is a good proxy for the system behavior.

\section{Maximum Code Diversity}
\label{sec:different-codes}

After examining the extreme case of minimum code diversity in the previous section, we turn to the opposite extreme, maximum code diversity, in this section.
The possibilities of combining codes of different families have previously been investigated, for instance in \cite{DBLP:conf/IEEEcloud/HellgeM16}.
We explore the inter-operability of three systems employing an RLN-C, RS-C and LDPC-C, respectively, in an experimental case study of the framework depicted in Fig.~\ref{fig:system-model}.
We assume the RLN-C and RS-C are of parameters $[160, 128]_{256}$ and constructed in the usual way \cite{DBLP:journals/tit/HoMKKESL06,reed1960polynomial,DBLP:journals/tcom/BorujenyA16}.
As LDPC-C we use the AR4JA code \cite{DBLP:conf/globecom/DivsalarJDT05,ccsds_greenbook_tmsynccc} of equivalent parameters $[1280, 1024]_2$.

Linear codes can be defined as the nullspace of their parity-check matrices $\BFH$.
To combine the codes, we transform the defining systems of linear equations to the common base field $\IF_2$ as in \cite{Bloemer95anxor-based,DBLP:conf/IEEEcloud/HellgeM16}.
Let $\left\{\BFX\right\}_{y}$ denote the `$y$-th' entry of $\BFX$.
The parity-check equations in $\IF_{p^m}$,
\begin{IEEEeqnarray}{rl}
    \BFH \BFc^\T = \BFzero^\T
    &\iff{}   \nonumber\\
    &\forall i \in \{ 1, ..., n\!-\!k \} :\; \sum_{j=1}^{n} \left\{\BFH\right\}_{ij} \left\{\BFc\right\}_{j} = 0,   \IEEEeqnarraynumspace
\end{IEEEeqnarray}
can be equivalently expressed in $\IF_p$
(albeit with a larger system of equations),
once $+$ and $\cdot$ in $\IF_{p^m}$ are reduced to operations in $\IF_p^m$.
Let $\Ivec: \IF_{p^m}\to\IF_p^m$ be the isomorphism between $\IF_{p^m}$ and $\IF_p^m$, and $\Ivec{}^{-1}: \IF_p^m\to\IF_{p^m}$ its inverse.
Clearly,
\begin{IEEEeqnarray}{rCl}
    \forall \alpha, \beta \in \IF_{p^m}, \gamma \in \IF_p :\; \Ivec\left( \alpha + \gamma \cdot \beta \right) = \Ivec\left(\alpha\right) + \gamma \cdot \Ivec\left(\beta\right).   \IEEEeqnarraynumspace
\end{IEEEeqnarray}
Let $f_\alpha(\beta) \eqdef \alpha \cdot \beta$. Since $+$ and $\cdot$ are distributive, $\tilde{f}_{\alpha}\left( \BFb \right) \eqdef \Ivec\left( f_\alpha\left( \Ivec^{-1}\left( \BFb \right) \right) \right)$ is linear, and thus equivalently represented by a matrix $\Imat\left( \alpha \right) \in \IF_p^{m \times m}$ associated with $\alpha$,
\begin{IEEEeqnarray}{rCl}
    \tilde{f}_{\alpha}\left( \BFb \right) = \Imat\left( \alpha \right) \BFb.
\end{IEEEeqnarray}
We thus get the equivalent system of equations in $\IF_p$,
\begin{IEEEeqnarray}{rCl}
    \forall i \in \{ 1, ..., n\!-\!k \} :\; \sum_{j=1}^{n} \Imat\left( \left\{\BFH\right\}_{ij} \right) \Ivec\left( \left\{\BFc\right\}_{j} \right) = \BFzero.   \IEEEeqnarraynumspace
\end{IEEEeqnarray}

The construction applies analogously to the codes' generator matrices $\BFG$. While technically all three codes have parameters $[1280, 1024]_2$ after this `lifting operation' from $\IF_{256}$ to $\IF_{2}$, we do not pick columns from the lifted generator matrices independently. Rather, we adopt block-aligned erasures, where blocks of eight columns are either jointly erased or jointly present, mimicking the erasure of $\IF_{256}$ rather than $\IF_2$ symbols. For the RLN-C and RS-C, this preserves their properties (both codes behave considerably worse under random erasure of $\IF_2$ sub-symbols), while the LDPC-C's performance is unaffected.

\begin{figure}[!t]
    \centering
    \begin{tikzpicture}
        \begin{ternaryaxis}[
            ternary limits relative=false,colormap/hot,
            colorbar horizontal,colorbar sampled line,colorbar style={xlabel={Probability of Successful Decoding}},
            point meta min=0,point meta max=1,
            xlabel={RLN},
            ylabel={RS},
            zlabel={LDPC},
            label style={sloped},
            minor tick num=1,
            xmin=0, xmax=128,
            ymin=0, ymax=128,
            zmin=0, zmax=128,
            xtick={0,16,...,128},
            ytick={0,16,...,128},
            ztick={0,16,...,128},
            mark size=0.15cm,
            major tick length=0.3cm,
        ]
        \input{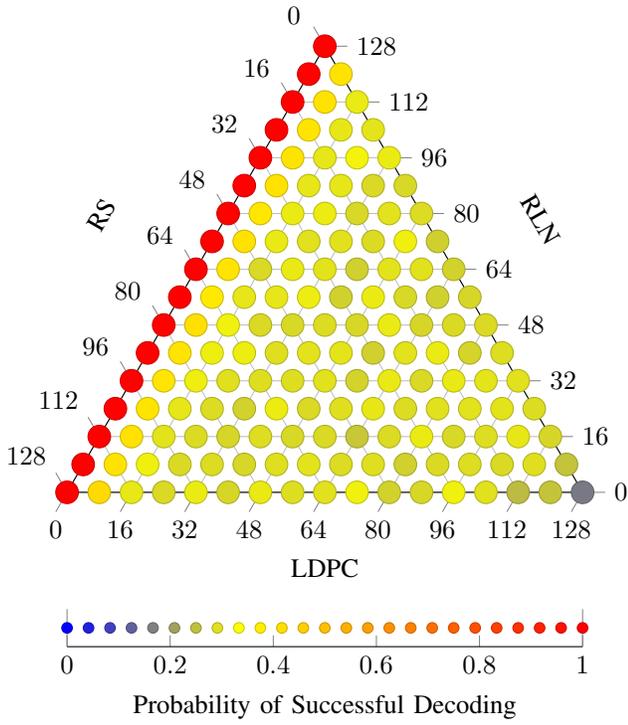}
        \end{ternaryaxis}
    \end{tikzpicture}
    \caption{Probability of successful decoding (mark color) after downloading the minimum number of $k=128$ symbols, consisting of a mix of RLN, RS and LDPC coded symbols (coordinate). The upper center, lower left, and lower right vertex correspond to downloading only RLN, only RS, and only LDPC coded symbols, respectively. Other points correspond to interpolated mixes.}
    \label{fig:combination-different-codes}
\end{figure}

Fig.~\ref{fig:combination-different-codes} shows the probability of the user being able to decode (\ie, the matrix $\BFGtilde$ being full-rank) after downloading the theoretical minimum of $k' = 8 \cdot 128 = 1024$ base field $\IF_2$ symbols in a block-aligned fashion to mimic $\IF_{256}$ symbols.
It can be seen that RS-C and RLN-C symbols can be mixed without performance penalty and the close-to-optimal performance of RLN-Cs over large fields is preserved.
As soon as LDPC-C symbols enter the mix, downloading the theoretical minimum of symbols is likely not sufficient.
However, after downloading one, two and three additional $\IF_{256}$ symbols, the probability of successful decoding reaches $90\%$, $99\%$ and $99.9\%$, respectively,
independent of the mixture
-- which is $2.34\%$ transmission overhead (for $131$ instead of $128$ symbol transmissions).
This is the performance one would expect for an RLN-C over small fields.
In fact, when replacing the LDPC-C with a binary RLN-C, the qualitative system behavior remains unchanged.

Note that erasure decoding using Gaussian elimination over $\BFGtilde$ was performed here.
Its cubic runtime complexity
is acceptable in the case at hand,
as the resulting systems of linear equations are of manageable size,
and the cost of Gaussian elimination
is amortized over the length of the packets.
Belief propagation decoding \cite{DBLP:books/daglib/0027802} is not feasible as neither RLN-Cs nor RS-Cs have the required sparse structure.

As a result of our empirical case study, we conclude that for system design purposes in uncoordinated content delivery, binary LDPC-Cs can be treated as binary RLN-Cs.
The system incurs a small degradation in transmission efficiency due to the small field size, but the overall performance stays close to the theoretical optimum.
Decoding is possible with high probability after downloading only $k + \Delta k$ symbols, with a small $\Delta k$ (in the order of three) number of excess symbols.

In contrast to the degradation observed in Section~\ref{sec:same-codes} from using the same code across multiple storage systems, code diversity enables close-to-optimal uncoordinated content delivery. Note, that lifting to the base field and random recoding transforms any code into an RLN-C over $\IF_2$, providing an effective means to increase code diversity in existing systems.

\section{Conclusion}
\label{sec:conclusion}

In this paper we analyzed the performance of uncoordinated file delivery from multiple coded storage systems.
System implementers and network operators should draw the following conclusions and actionable recommendations from our results.
In Section~\ref{sec:same-codes}, we showed how uncoordinated content delivery from multiple storage systems using an identical code incurs a loss in transmission efficiency
(in comparison to coordinated content delivery)
due to a coupon collector effect.
In Section~\ref{sec:different-codes}, we showed that this degradation can be resolved (almost perfectly) by using different linearly independent codes in all the storage systems.

To leverage the benefits of coding for efficient uncoordinated content delivery from multiple sources, implementers should make the raw storage-coded data and details about the employed coding techniques available to the transport layer, rather than treating storage-coding as a hidden internal of the storage system.
Implementers and operators should employ code families that allow the design of larger numbers of linearly independent codes.
RLN-Cs and RS-Cs over larger field sizes are natural candidates.
Network operators should assign codes to storage systems carefully to maintain proper code diversity (analogous to frequency reuse in mobile network base stations).
When the same code is used multiple times, random remixing/recoding of the stored symbols before transmission can be used to emulate (close-to-optimal) RLN-C performance.
Finally, the procedure outlined in this paper lets system operators change the employed codes without requiring downtime or instantaneous recoding of the whole database.

\IEEEtriggeratref{15}
\bibliographystyle{IEEEtran}
\bibliography{IEEEabrv,IEEEabrv_OWN,references}

\end{document}